\newcommand{\equal} {\mathcal{E}}
\newcommand{\add} {\mathrm\Pi_{\operatorname{add}}}
\newcommand{\mul} {\mathrm\Pi_{\operatorname{mul}}}
\newcommand{\sqrtprot} {\mathrm\Pi_{\operatorname{sqrt}}}
\newcommand{\invprot} {\mathrm\Pi_{\operatorname{inv}}}
\newcommand{\ca}{\mathrm\Pi_{\operatorname{\mathcal{E}}}}
\makeatletter\renewcommand{\ALG@name}{Alg.}\makeatother
\begin{document}

\title{Confidential Truth Finding with~Multi-Party~Computation (Extended~Version)}

\author{Angelo Saadeh\inst{1,6} \and Pierre Senellart \inst{2,4,5,6,7}
\and Stéphane Bressan \inst{3,6,7}}

\institute{%
    LTCI, Télécom Paris, IP Paris, Palaiseau, France\\
\email{angelo.saadeh@telecom-paris.fr}\\
\and DI~ENS, ENS, PSL University, CNRS, Paris, France\\
\email{pierre@senellart.com}
\and National University of Singapore, Singapore\\ \email{steph@nus.edu.sg}
\and Inria, Paris, France
\and Institut Universitaire de France
\and CNRS@CREATE LTD,
Singapore
\and IPAL, CNRS, Singapore}

\maketitle

\begin{abstract}
Federated knowledge discovery and data mining are challenged to assess the trustworthiness of data originating from autonomous sources while protecting confidentiality and privacy. Truth-finding algorithms help corroborate data from disagreeing sources. For each query it receives, a truth-finding algorithm predicts a truth value of the answer, possibly updating the trustworthiness factor of each source.
Few works, however, address the issues of confidentiality and privacy.
We devise and present a secure secret-sharing-based multi-party computation protocol for pseudo-equality tests that are used in truth-finding algorithms to compute additions depending on a condition.
The protocol guarantees confidentiality of the data and privacy of the sources.
We also present variants of truth-finding algorithms that would make the computation faster when executed using secure multi-party computation.
We empirically evaluate the performance of the proposed protocol on two state-of-the-art truth-finding algorithms, Cosine, and 3-Estimates, and compare them with that of the baseline plain algorithms. The results confirm that the secret-sharing-based secure multi-party algorithms are as accurate as the corresponding baselines but for proposed numerical approximations that significantly reduce the efficiency loss incurred.

\keywords{truth finding  \and secure multi-party computation \and secret-sharing \and uncertain data \and privacy.}
\end{abstract}

\section{Introduction}

Federated knowledge discovery and data
mining~\cite{federated-learning-introduction,federated-learning-survey}
are challenged to assess the trustworthiness of data originating from
autonomous sources while protecting confidentiality. Truth-finding
algorithms~\cite{TF-survey} help corroborate data from disagreeing
sources. For each query it receives, a truth-finding algorithm predicts a
truth value of the answer, possibly updating the trustworthiness factor
of each source. Few works, however, address the issues of confidentiality
and privacy. We consider the design and implementation of truth-finding
algorithms that protect the confidentiality of sources' data.

For example, a creditor may wish to determine whether loan applicants are
creditworthy. The creditor would want to base her decision on the different, possibly disagreeing, evaluations of the applicants by several financial institutions. However, financial institutions only agree to contribute their respective evaluations to the decision provided it is neither revealed to the creditor, to the other financial institutions, nor to third parties. For such a purpose, we turn to secret-sharing-based secure multi-party computation~\cite{mpcbook-damgard}, or simply secure multi-party computation (MPC).

We devise and present a secure multi-party pseudo-equality protocol that securely computes additions depending on a condition -- we call them conditioned additions -- for truth-finding algorithms. In particular, we present a secure equality test alternative that uses a polynomial evaluation to reduce the number of communication; this is used for conditioned additions, an operation that is an essential building block of many truth-finding algorithms. The protocol guarantees the confidentiality of the data.

We also devise several variants of privacy-preserving truth-finding algorithms; ones that implement the truth-finding algorithms without changes, and others with modifications that aim to make the computation more efficient.

The secure multi-party protocols are then implemented with two servers.
In the running example, the two servers can be operated by two
non-colluding entities such as independent third parties, the creditor,
or the financial institutions. We empirically evaluate the performance of
the proposed protocol on two state-of-the-art truth-finding algorithms,
Cosine~\cite[Algorithm 1]{cosine} and 3-Estimates~\cite[Algorithm
4]{cosine} (see also \cite{li2013truth,berti2015data} for further
experiments on these algorithms), and compare them with that of
the non-secure baseline algorithms. The results confirm that the secure
multi-party algorithms are as accurate as the corresponding baselines
except for proposed modifications to reduce the efficiency loss incurred.

\medskip

Set \(n\in\mathbb{N^*}\), and let \(\mathcal{V}\) be a set of \(n\)
sources. The client would like to label \(k\) queries (or facts)
\(\{f^1,...,f^k\}\). A truth-finding algorithm outputs a truth value for
a query when different data sources (or sources) provide disagreeing
information on it. Concretely, the truth-finding algorithm takes
\(v^1,...,v^n\) as input with \(v^i\in\{-1,0,1\}^k\), and outputs
estimated truth values in \([-1,1]^k\subset\mathbb{R}^k\) or \([0,1]^k\subset\mathbb{R}^k\) depending on the truth-finding algorithm.

Truth-finding (or truth discovery) algorithms \cite{TF-survey} are usually run by the client in order to know the truth value of a given query when the sources give disagreeing answers. That is, for each of the client's queries, each source in \(\mathcal{V}\) delivers an answer \(v^i\) such that an output of \(1\) corresponds to a positive answer, \(-1\) to a negative one, and \(0\) if the source does not wish to classify the data point.
Cosine and 3-Estimates \cite{cosine} are two truth-finding algorithms
that given a number of queries \(k\), output a truth value in the range
\([-1,1]^k\subset\mathbb{R}\) and a trust coefficient in each of the
sources, or sources. In addition, 3-Estimates computes an estimate of the difficulty of each query.

The goal of this work is to execute truth-finding algorithms that protect sources' data using secure multi-party computation (MPC)~\cite{BGWmodel,mpcbook-damgard}. More generally, given a function \(F\) and a set of private inputs \(x^1,...,x^m\) respectively owned by \(P_1,..., P_m\), MPC is a cryptographic approach that makes it possible to compute the output of the function \(F(x^1,..,x^m)\) without resorting to a third party that would compute the function \(F\) and would send the result back. MPC will be used to implement the Cosine and 3-Estimates algorithms without having any source disclose their answer.

\section{Background}
\paragraph{Truth finding.}
Truth finding \cite{truth-finding-2007,cosine,li2013truth,berti2015data}
is an effective tool used to handle uncertain data. More specifically,
when a dataset is missing some information and the dataset owner does not
have access to this information, they can ask sources questions (or
queries) in order to complete the dataset. Yet again, the sources may not
be completely sure of the answer they are delivering. Truth-finding
algorithms rely on the correlation between the answers of all
sources. Furthermore, the client does not have any information about how
the sources get their information, i.e., how they construct their model, and how they take their decisions. In other words, in real applications, the process is completely unsupervised and this is why truth-finding algorithms are used.

Consider a set of facts \(\{f^1,...,f^k\}\) and a set of \(n\)
sources. Each source can map the fact \(f^i\) to \(\{-1,0,1\}\), and the
image of the facts computed by a source represents the source's view of
the facts. A negative value represents a false fact, a
positive value represents a true fact, and a null value
is an undetermined fact. We set \(v^1,...,v^n\) to be the sources' views
of the facts, more precisely, \(v^i\in\{-1,0,1\}^k\) is the \(i\){th} source's view of the facts \(\{f^1,...,f^k\}\).

In this paper, we consider two existing truth-finding algorithms for which
we apply MPC, though we stress that our
overall approach only marginally depends on the specific algorithm used.

The idea of Cosine, based on cosine similarity in information
retrieval methods \cite{information-retrieval} and precisely described in
\cite[Algorithm 1]{cosine}, is to iteratively compute a truth value for
each fact given the views of all the sources. With each iteration, the
algorithm also updates a trustworthiness factor for each source. In the
end, the algorithm returns one truth value for each fact and one trust
factor for each source. The truth value and trust factor are initialized
and then updated at each iteration as follows: the truth value is computed as the sum of answers compounded with the trust factor of each source, and then the trust factor is computed by normalizing the number of answers that each source got right.

On the other hand, 3-Estimates \cite[Algorithm 3]{cosine} takes a third factor into
consideration: the difficulty of the query. The algorithm outputs
a truth value and trust factor like Cosine, but also outputs a difficulty
factor for each fact (or query). More formally, for a query \(f^j\) if we
set \(\delta^j\) to be the probability of the query \(f^j\) being
difficult and \(\theta^i\) the probability of the $i${th} source (that
didn't answer~\(0\)) being not trustworthy, then the algorithm estimates
the truth value as:
\begin{equation*}\left\{
\begin{aligned}
\Pr(\mbox{source $i$ is wrong on fact $j$}) & \coloneqq \delta^j\theta^i  \\
\Pr(\mbox{source $i$ is right on fact $j$}) & \coloneqq 1 - \delta^j\theta^i
\end{aligned}\right.
\end{equation*}

If \(\lambda^j\) is the number of sources responding to query
\(f^j\), and \(v^{ij}\) the answer of the \(i\){th} source to~\(f^j\),  then the probability of \(f^j\) to be true is given by:
\begin{align*}
    \textstyle\lambda^j \Pr(f^j \mbox{ is true}) &\textstyle =
    \sum_{i,v^{ij}=1}\Pr(\mbox{\(i\) is right on $j$}) +
    \sum_{i,v^{ij}=-1}\Pr(\mbox{\(i\) is wrong on $j$}) \\
    &\textstyle = \sum_{i,v^{ij}=1} 1 - \delta^j\theta^i + \sum_{i,v^{ij}=-1} \delta^j\theta^i
\end{align*}
A similar equation is used to update the difficulty of each query and the trust in each source on each iteration.

\paragraph{Secure multi-party computation.}
Secure multi-party computation (MPC) \cite{BGWmodel,mpcbook-damgard} allows a set of \(m\) players to compute a
function on their private inputs without revealing them to a third party.
The solution proposed by this paper only uses two parties, though this
number could be increased at the cost of lower efficiency. To this end,
we present background on MPC in the specific
case where \(m=2\). Increasing the number of servers to some
arbitrary~\(m\) would tolerate \(m-1\) players colluding with each other in the passive security model. However, in a real-world application, the two servers could be chosen in a way that they have no interest in colluding, for example, one server could be the client, and the second server could be a representative of the sources. Therefore, it is sufficient to consider only two servers and limit the number of communication to a minimum.

Let \(\mathbb{Z}_{2^q} \coloneqq \mathbb{Z}/2^q\mathbb{Z}\).
Suppose each of \(P_1\) and \(P_2\) has a secret \(x^1\) and \(x^2\) both in \(\mathbb{Z}_{2^q}\). Their goal is to compute \(y = F(x^1, x^2)\) where \(F\) is a public function without revealing their respective inputs.
The first step is having each player \(P_i\) mask their secret \(x^i\)
with a random ring element \(x^i_j\), send the mask \(x^i_j\) to the
other player (\(P_j\)) and keep the masked value \(x^i_i= x^i-x^i_j\). Of
course \(x^i = x^i_i+x^i_j\) in the ring; this is called the additive
secret-sharing scheme \cite{MPSPDZ}. The elements with subscripts
correspond to ring elements that seem random but whose sum is equal to the secret; they are called additive shares.

The players then evaluate the arithmetic circuit of \(F\) such that on each addition node a protocol \(\add\) is used and on each multiplication node the protocol \(\mul\) \cite{beaver1991} is used. After evaluating all the nodes, each player \(P_i\) ends up with a value \(y_i\) such that \(y = y_1+y_2\), so the players reveal their final values and add them together.
If \(P_1\) holds \(a_1,b_1\) and \(P_2\) holds \(a_2,b_2\) such that
\(a=a_1+a_2\) and \(b=b_1+b_2\) then for \(i\in\{1,2\}\) \(\add(a_i,b_i)\) allows
\(P_i\) to hold \(c_i\) such that \(a+b =
c_1+c_2\) without learning \(a\) or \(b\). In addition, for \(i\in\{1,2\}\)
\(\mul(a_i,b_i)\) allows \(P_i\) to hold \(c_i\) such that \(a\cdot b = c_1+c_2\) without learning \(a\) or
\(b\). This is why by evaluating the arithmetic circuit with MPC protocols node by node the players would obtain \(y_1\) and \(y_2\). The goal is to find the arithmetic circuit that computes \(F\) or best approximates \(F\).

The secret sharing scheme, addition, and multiplication protocols can be
chosen as a way to satisfy the needed security level; in our work we
implement the minimal security measure which is passive security. In
simple words, the players should not deviate from the protocol and should
not learn information about each other's input unless it can be deduced
from the output. The protocols we consider do not resist an active
adversary, i.e., an adversary that corrupts a player and deviates from
the protocol. For more information about the adversary types and formal
security definitions of MPC, see~\cite{mpc-theory-book}.

Furthermore, we use additive secret-sharing-based MPC, and the protocols are computed in a finite ring. But
the inputs and the operations in truth-finding algorithms are in~\(\mathbb{R}\). We therefore map the
real data inputs to the finite ring using fixed-point precision
\cite{fixedpoint-precision} -- which is classically done in MPC.
For simplicity, we refer with the same notation to the real inputs and
their ring mapping. Real addition and multiplication operations are
approximated by other operations, but for simplicity, we also refer to them with the same notation.

\section{Related Work}
Early works in truth-finding algorithms \cite{truth-finding-2007, cosine}
show that majority voting is not the best solution to corroborate data
when different sources provide conflicting information on it. Interestingly, further
studies~\cite{li2013truth,berti2015data} show that no single
truth-finding algorithm performs well in all scenarios and benchmarks, we
just choose Cosine and 3-Estimates as representative examples of such
algorithms.

Since their introduction, cryptographic privacy-preserving tools like MPC and homomorphic encryption \cite{HE-survey} have been used for federated tasks.
Current state-of-the-art multi-party computation protocols allow players
to compute functions securely, robustly, and efficiently. Many secure
multi-party frameworks have been developed such as~\cite{ABY3} and some of
them are specific for machine learning tasks \cite{crypten,pysyft}.
Alternatively, homomorphic encryption -- also used for machine learning
tasks \cite{concrete-ml-zama} -- can be used for scenarios where no communications take place during the computations; only one party needs to be doing them.

Concerning applications similar to truth finding, cryptography has been
used for e-voting: for example, MPC in \cite{e-voting-mpc} and
homomorphic encryption in \cite{e-voting-he}. Both tools have even been
combined \cite{private-aggregation-without-secure-channel} in order to
achieve a privacy-preserving aggregation without secure communication
channels. These results are for simple majority voting and do not
consider other truth-finding algorithms. Privacy-preserving truth-finding
algorithms were not common until 2015 with \cite{TF-2015} and afterward with \cite{HE-crowdsensing-17, HE-crowdsensing-18} and other similar works; all these works consider the same specific problem of mobile crowd-sensing systems, and they all use homomorphic encryption and one specific truth-finding algorithm: CRH~\cite{crh}.
The aim of this paper is more general, it proposes MPC protocols and
implementation techniques that can be applied to various truth-finding
algorithms. To our knowledge, MPC has not been used to securely evaluate truth-finding algorithms.

\section{Proposed Approach}
The first task we wish to achieve is private voting, i.e., the client
sends queries to each source, and the source classifies the query. In the
case where the query is a vector of features and the models are logistic
regressions, existing MPC works \cite{ABY3} can keep the query private.
We suppose that the answers are already computed and secret-shared on two
servers \(P_1\) and \(P_2\) using a two-party additive secret sharing. In
other words, \(P_1\) holds \(v^{ij}_1\) and \(P_2\) holds
\(v^{ij}_2\) such that \(v^{ij}=v^{ij}_1+v^{ij}_2\) is the \(i\){th}
source's answer for the query \(f^j\) and is equal to \(-1\), \(0\), or \(1\).

The second step which is the aggregation of the data (the answers) is
computed on the two servers $P_1$ and~$P_2$. The problem is now constructing a secure two-party computation algorithm with additively shared data that implements the truth-finding algorithms using their arithmetic circuits. Once the circuits are evaluated, the two servers (\(P_1\) and \(P_2\)) send their share of the output to the client who reconstructs it by adding the received shares together.

\subsection{MPC Protocols for Truth Finding}
Other than additions and multiplications, the truth-finding
algorithms we implement -- Cosine and 3-Estimates -- use existing
real-number
operations like division, and square root.  We also propose a way to compute conditioned sums by replacing equality tests with degree-two polynomial evaluations. We now explain
how these three
operations can be approximated using arithmetic circuits consisting of additions and
multiplications. Furthermore, multiplications are communication-costly in
MPC, so the aim is to use a low number of
multiplications -- or communications in general.

\paragraph{Division and square root.}
The approximations given below are based on numerical methods. These
approximations are widely used in MPC frameworks for machine learning,
like in \cite{pysyft, crypten}, where the real inverse and the square root
are computed iteratively using the Newton--Raphson method. Suppose \(P_1\)
holds \(a_1\) and \(P_2\) holds \(a_2\) such that \(a=a_1+a_2\). Then we
denote by \(\invprot(a_1,a_2)\) the protocol that allows each player to
hold \(b_1\) and \(b_2\) respectively without learning information about
\(a_1\) and \(a_2\), and where \(b_1+b_2=\sqrt{a}\). Similarly, we denote
by \(\invprot(a_1,a_2)\) the protocol that allows each player to hold
\(c_1\) and \(c_2\) respectively without learning information about
\(a_1\) and \(a_2\), and where \(c_1+c_2=\frac{1}{a}\). These protocols
\(\sqrtprot\), \(\invprot\) are both defined as a succession of additions and
multiplications. They can be modeled into an arithmetic circuit which
could be evaluated using existing MPC protocols for additions and
multiplications like \(\add\) and \(\mul\).
Finally,  note that division is multiplication by an
inverse.
Consequently, the evaluation
of the arithmetic circuit satisfies the same level of security as these
two protocols.

\paragraph{Conditioned additions.}
We propose an alternative for the equality test, which is a degree-two polynomial evaluation.
The truth-finding algorithms we use require conditioned additions. Given two vectors of same size \(t=(t^1,...,t^k) \in \mathbb{R}^k\), \(z=(z^1,...,z^k) \in \{-1,0,1\}^k\), and an element \(\kappa\in\{-1,0,1\}\), we define the following operation:
\(S \coloneqq \sum_{i: z^i = \kappa} t^i.\)
In other words, the \(i\)th element of \(t\), \(t^i\), is added to the
sum only if the \(i\){th} element of \(z\), \(z^i\), is equal to
\(\kappa\). The difficulty is that even though \(\kappa\) is public,
\(z^i\) is private. To achieve this in MPC we start by defining the following function, for \(i\in\{1,...,r\}\):
\[\equal(z^i,\kappa)=\left\{
    \begin{aligned}
         1 & \mbox{ if } z^i = \kappa \\
         0 & \mbox{ if not.}
    \end{aligned}\right.
    \]
A naive way to compute the sum \(S\) is as follows:
\(S = \sum_i \equal(z^i,\kappa)\cdot t^i.\) This way to compute \(S\) requires an equality test which is costly in
MPC. To this end, we propose an alternative that makes good use of the fact that \(z^i,\kappa\in\{-1,0,1\}\). The goal is to express the function \(\equal\) as a polynomial so that it can be computed using the smallest number of additions and multiplications possible. We define and use the following expressions of \(\equal(z^i, \kappa)\).

If \(\kappa=-1\), we compute \(S\) as follows:
\(S=\sum_i \frac{1}{2}((z^i)^2-z^i)\cdot t^i.\)
We have:
\[\frac{1}{2}((z^i)^2-z^i) = \left\{
\begin{array}{cc}
    1 & \mbox{ if } z^i=-1 \\
    0 & \mbox{ if } z^i=0 \\
    0 & \mbox{ if } z^i=1
\end{array}
\right.\]
Hence by multiplying \(\frac{1}{2}((z^i)^2-z^i)\) by \(t^i\), the only
elements considered in the sum are the ones such that \(z^i=-1\). The function \(\frac{1}{2}((z^i)^2-z^i)\) is equal to \(\equal(z^i,-1)\).\\
If \(\kappa=0\) we similarly compute \(S\) as:
\(S=\sum_i (1-(z^i)^2) \cdot t^i.\)
It is also straightforward that the function \(1-(z^i)^2\) is equal to \(\equal(z^i,0)\) because it outputs \(1\) if \(z^i=0\) and \(0\) elsewise.
If \(z=1\), in the same way, \(S\) is computed as:
\(S=\sum_i \frac{1}{2}((z^i)^2+z^i)\cdot t^i.\)
\begin{lemma}[Conditioned additions]
Denote by \(\ca\) the MPC protocol implementing the function \(\mathcal{E}\) using the three previously defined degree-2 polynomials. \(\ca\) does not reveal information about the other's player's share.
\end{lemma}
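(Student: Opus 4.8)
The plan is to reduce the privacy claim for $\ca$ to the privacy of its two building blocks, $\add$ and $\mul$, using the standard sequential composition theorem for passively (semi-honestly) secure protocols. Nothing new has to be proved about correctness: the case analysis preceding the lemma already shows that each of the three degree-2 polynomials equals $\equal(z^i,\kappa)$ and hence that the circuit computes $\sum_{i:z^i=\kappa} t^i$; the content of the lemma is purely the security statement.

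First, I would write each of the three selector polynomials as an explicit arithmetic circuit over $\mathbb{Z}_{2^q}$ whose gates are only additions, multiplications, and multiplications by public constants. Concretely, given shares of $z^i$: one invocation of $\mul$ produces shares of $(z^i)^2$; one invocation of $\add$ produces shares of $(z^i)^2 \pm z^i$ or of $1-(z^i)^2$ (negation and the public constant $1$ being applied locally); scaling by the public constant $\tfrac12$ is a local operation (handled by the fixed-point encoding); then one further invocation of $\mul$ multiplies the resulting selector by $t^i$; and finally the outer sum $\sum_i$ is a sequence of $\add$ invocations. Hence $\ca$ is, by construction, nothing more than a fixed sequence of calls to $\add$ and $\mul$ interleaved with purely local computation, with every intermediate value kept in additively-shared form.

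Second, I would note that the local steps -- negating a share, adding a public constant to one share, multiplying a share by a public constant -- involve no communication whatsoever: each $P_i$ transforms only its own share, and correctness follows from $c\cdot a = (c\cdot a_1) + (c\cdot a_2)$ and $a+\gamma = (a_1+\gamma) + a_2$ in the ring. Such steps are trivially simulatable and leak nothing about the other player's share.

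Finally, I would invoke the sequential modular composition theorem for semi-honest security (see \cite{mpc-theory-book}): since $\add$ and $\mul$ are each passively secure (as recalled in the Background section) and $\ca$ calls them only on secret-shared inputs while never opening any intermediate sharing, the composed protocol $\ca$ inherits passive security; in particular a corrupted $P_i$ learns nothing about $P_j$'s shares beyond what follows from its own output share. The one point that must be verified with care -- and the only real obstacle -- is precisely this proviso, namely that $\ca$ opens no sharing internally; this is immediate from the circuit description above, because the selector polynomial is evaluated gate-by-gate on shares and the resulting $S$ is itself left shared (it is reconstructed only later, by the client, from the two servers' shares).
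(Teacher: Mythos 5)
Your proposal is correct and follows essentially the same route as the paper, whose proof is a one-sentence observation that the three selector polynomials use only additions and multiplications and therefore inherit the security of $\add$ and $\mul$. Your version merely makes explicit what the paper leaves implicit (the gate-by-gate circuit decomposition, the triviality of the local constant operations, the appeal to sequential composition, and the check that no sharing is opened internally), so it is a more rigorous rendering of the same argument rather than a different one.
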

\begin{proof}
The three conditioned sums defined in this section do not need
comparisons and they are expressed using only additions and
multiplications, so their security level is the same as \(\add\) and
\(\mul\).\qed
\end{proof}

\paragraph{3-Estimates with MPC.} Our MPC implementation of 3-Estimates
is given in Alg.~\ref{alg:3est}.
The protocols presented in the previous section allow us to implement the
truth-finding algorithms with MPC.

\begin{algorithm}
    \caption{\hspace*{1em}3-Estimates algorithm with secure multi-party computation}\label{alg:3est}
\begin{algorithmic}
    \Require The answers $(v^{ij})_{i=1..n}^{j=1..k}$ are secret shared on two servers
    \Ensure The client receives $y, \theta, \delta$
    \For{$i=1..n$} \Comment{Initialization of the untrustworthiness of each source}
        \State $\theta^i \gets 0.4$
    \EndFor

    \For{$j=1..k$} \Comment{Initialization of the difficulty of each query}
        \State $\delta^j \gets 0.1$
    \EndFor
    \For{$j=1..k, i=1..n$} \Comment{Compute equality tests for the conditioned sums}
    \State $\sigma^{ij} \gets \ca(v^{ij}, 1)$
    \State $\tau^{ij} \gets \ca(v^{ij}, -1)$
    \EndFor

    \Repeat
    \For{$j=1..k$} \Comment{Update the truth value of each query}
        \State $\mathit{posViews} \gets \sum_{i=1}^{i=n} \sigma^{ij}\cdot(1-\theta^i\delta^j) $
        \State $\mathit{negViews} \gets \sum_{i=1}^{i=n} \tau^{ij}\cdot(\theta^i\delta^j) $
        \State $\mathit{nbViews} \gets \sum_{i=1}^{i=n} \sigma^{ij}+\tau^{ij} $
        \State $y_j \gets (\mathit{posViews} + \mathit{negViews})\cdot \invprot(\mathit{nbViews}) $
    \EndFor
    \State Normalize \(y\)
    \For{$j=1..k$} \Comment{Update the difficulty score of each query}
        \State $\mathit{posViews} \gets \sum_{i=1}^{i=n} \sigma^{ij}\cdot(1-y^j)\cdot\invprot(\theta^i) $
        \State $\mathit{negViews} \gets \sum_{i=1}^{i=n} \tau^{ij}\cdot y^j\cdot\invprot(\theta^i) $
        \State $\mathit{nbViews} \gets \sum_{i=1}^{i=n} \sigma^{ij}+\tau^{ij} $
        \State $\delta^j \gets (\mathit{posViews} + \mathit{negViews})\cdot \invprot(\mathit{nbViews}) $
    \EndFor
    \State Normalize \(\delta\)
    \For{$i=1..n$} \Comment{Update the untrustworthiness of each source}
        \State $\mathit{posFacts} \gets \sum_{j=1}^{k} \sigma^{ij}\cdot(1-y_j) \cdot \invprot(\delta^j) $
        \State $\mathit{negFacts} \gets \sum_{j=1}^{k} \tau^{ij}\cdot y_j \cdot \invprot(\delta^j) $
        \State $\mathit{nbFacts} \gets \sum_{j=1}^{k} \sigma^{ij}+\tau^{ij} $
        \State $\theta^i \gets (\mathit{posFacts} + \mathit{negFacts})\cdot \invprot(\mathit{nbFacts}) $
    \EndFor
    \State Normalize \(\theta\)
    \Until{convergence}

    \State Servers send the shares of $y,\theta, \delta$ to
    Client.
\end{algorithmic}
\end{algorithm}

\begin{theorem}
Alg.~\ref{alg:3est} ensures that the client learns the truth value and the difficulty of each query as well as the trust factor with passive security.
\end{theorem}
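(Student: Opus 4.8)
The plan is to argue by composition: Algorithm~\ref{alg:3est} is built entirely out of the sub-protocols $\ca$, $\invprot$, $\add$, and $\mul$ together with public-constant operations (the initializations $\theta^i\gets0.4$, $\delta^j\gets0.1$, the public scalars $1$, $\pm1$, and the public loop bounds $n,k$), so if each sub-protocol is passively secure and passive security is preserved under sequential composition of arithmetic-circuit gadgets on additive shares, then the whole algorithm inherits passive security and the client learns exactly $y,\theta,\delta$. First I would invoke the earlier results: by the Conditioned Additions lemma, $\ca$ reveals nothing about the other player's share; by the paragraph on division and square root, $\invprot$ (and hence division, being multiplication by an inverse) is a succession of $\add$ and $\mul$ gates and therefore has the same security level as those; and $\add$, $\mul$ are the standard passively secure additive-sharing and Beaver-triple protocols cited as \cite{beaver1991,MPSPDZ}. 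Each line of the algorithm that is not a sub-protocol call is a local affine operation on shares (adding/subtracting shares, multiplying a share by a public constant), which leaks nothing and requires no communication.

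Second, I would make the composition step explicit. Every intermediate quantity in the algorithm ($\sigma^{ij},\tau^{ij},\mathit{posViews},\mathit{negViews},\mathit{nbViews},y_j$, and the analogous quantities for $\delta$ and $\theta$) remains in additively-shared form at the end of the line that produces it, and is consumed only as a shared input by later lines; no share is ever opened except in the final step where the servers send their shares of $y,\theta,\delta$ to the client. Hence the protocol is a well-formed arithmetic circuit over $\mathbb{Z}_{2^q}$ (with the fixed-point encoding of reals as described in the Background) whose gates are exactly the passively secure gadgets above, plus free local linear maps. Invoking the standard sequential composition theorem for passively secure MPC over additive shares (see \cite{mpc-theory-book,mpcbook-damgard}), a simulator for the full view of a corrupted $P_i$ is obtained by stitching together the simulators of the individual gadgets, since the inputs to each gadget are uniformly random shares and the only output revealed is the final one. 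Therefore a corrupted server's view can be simulated from its own input shares and the final output shares alone, which is the definition of passive security.

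Third, I would check correctness of the disclosed value: since each gadget computes (an arithmetic-circuit approximation of) the intended real operation on the reconstructed values, and the reconstruction map $(\cdot)_1+(\cdot)_2$ is a ring homomorphism through all the $\add/\mul$ gates, the pair of shares the servers hold at the end sums to the same $y,\theta,\delta$ that the plaintext 3-Estimates computation (with the polynomial $\ca$ substituted for the equality test and Newton--Raphson for inverse/square-root) would produce; adding the received shares thus yields exactly the truth value, difficulty, and trust factor, and nothing else about the $v^{ij}$ beyond what these outputs entail. The main obstacle is not a deep one but a bookkeeping one: one must verify that every line really is either one of the named sub-protocols or a local linear operation on shares — in particular that the $\Comment$-ed ``Normalize'' steps and the convergence test do not secretly require a comparison or an opening that would leak information (normalization is division by a public-length sum, i.e.\ another $\invprot$ call, and ``until convergence'' must be realized as a fixed public number of iterations, since a data-dependent stopping condition would itself leak). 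Spelling out that the control flow is data-independent, and that fixed-point truncation inside $\mul$/$\invprot$ is handled by the same passively secure truncation gadget used in the cited frameworks \cite{crypten,pysyft}, is the only place the argument needs care.
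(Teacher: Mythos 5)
Your proposal is correct and follows the same compositional route as the paper's own (much briefer) proof sketch: the inputs are additively secret-shared, every step is either a local linear operation on shares or a passively secure gadget ($\add$, $\mul$, $\ca$, $\invprot$), nothing is opened before the final output, and passive security follows by sequential composition. One small factual adjustment: the ``Normalize'' steps in the unmodified 3-Estimates are min/max normalizations realized with secure comparison protocols (this is exactly what the paper later replaces with the public linear map $h$ for efficiency), not divisions by a public-length sum, but since those comparison protocols are themselves passively secure gadgets your composition argument still goes through; your observations that the stopping condition must be a fixed public iteration count and that fixed-point truncation needs a secure gadget are points the paper's sketch silently elides.
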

\begin{proof}[sketch]
The answers are additively secret-shared on the servers at the beginning,
giving the servers no information about the sources' answers at this point. Then the entire computation takes place in a secret-shared manner by evaluating an arithmetic circuit with secure addition and multiplication protocols, making the rest of the computation secure in the sense that the servers learn no information about the secrets.
\end{proof}

The detailed MPC Cosine algorithm is analogous to Alg.~\ref{alg:3est}; proof
of security is achieved in the same way.

\subsection{MPC-friendly alternative implementations}\label{sec:alg2}
In this section, we propose changes to
Cosine and 3-Estimates to reduce communication costs, at the cost of
a possibly higher number of errors. We also illustrate in this section the MPC version of 3-Estimates, along with a sketch of a security proof.

\paragraph{Normalization in 3-Estimates.}
In the 3-Estimates algorithm, the truth value, trust factor, and difficulty
score need to be normalized at each step. This could be done using a
secure comparison protocol to securely compute the minimum and the
maximum of each value, and then normalize them as it is done in
\cite{cosine}. Secure comparisons however are very costly in MPC. To
reduce the amount of communication we replace the normalization based on
finding the maximum and minimum by a pre-computed linear transformation
which forces the values to stay between \(0\) and \(1\). Concretely we
apply the function \(h(x)=0.5x+0.25\) to all the values after each
update. We evaluate the impact of this change in the experiments. The
chosen function, \(h\), is not perfect. Indeed, if we have information about the distribution of the parameters, we can pre-compute a linear normalization for every iteration. Using any public pre-computed or pre-defined normalizing function improves the efficiency of the algorithm because it would translate to using multiplication and addition by public constants, which is communication-free.

\paragraph{Efficient alternatives for Cosine.}

In Cosine, the truth value and trust factor can be negative, and protocol
\(\invprot\) can only be applied to positive numbers. Consequently, every
time there is a division by an element \(x\), the inverse protocol is
applied to \(|x|\) and then the result is multiplied by the sign of \(x\)
\cite{crypten}. Computing the sign of \(x\) requires computing a secure
comparison, which is communication-costly. With the aim to reduce the
number of communications, we propose inverting \(x^2\) and multiplying by
\(x\). This technique should give the same result with fewer
communications. However, in the Cosine algorithm, the denominators are a
linear combination of \((\theta^i)^3\) -- trust factors of sources to
the cube -- and since the trust factor is between \(-1\) and \(1\),
\((\theta^i)^3\) could be very small, and squaring it for the sake of a
faster inverse makes it even smaller. To avoid any precision issues, we
implement a version of the algorithm where we replace \((\theta^i)^3\) by
\(\theta^i\) which will have an impact on the truth value. This impact,
however, does not affect the sign of the truth value, it only affects its
amplitude, leaving the rounding (i.e., the final label) unchanged. We evaluate the impact of this change in the experiments. Additionally, replacing \((\theta^i)^3\) by \(\theta^i\) saves multiplications.

\section{Experimental Results}
We evaluate our protocols on two computing servers. We suppose that the
sources have already answered and secret-shared their answers. We use the
ring \(\mathbb{Z}_{2^{60}}\) with \(20\) bits of fixed precision. The two
servers communicate via a local socket network implemented in Python on
an Intel Core i5-9400H CPU (\(\text{2.50\,GHz}\times8\)) and a RAM of
15.4\,GiB. For the sake of the experiment, these communications are not encrypted or authenticated. Note that we do not compare our approach with the approaches cited in the related works, as they are based on homomorphic encryption and it is not comparable with secret-sharing-based multi-party computation which is done in a different setting, i.e. the players have to be online during the computation.

\paragraph{3-Estimates on Hubdub Dataset.}
We implement our solution using the dataset Hubdub
from~\cite{cosine}.\footnote{All datasets used, as well as the source code
    of our implementation, are available
at~\url{https://github.com/angelos25/tf-mpc/}.}
This dataset is
constructed from \(457\) questions from a Web site
where users had to bet on future events. As the questions
had multiple answers, they have been increased to \(830\) questions to
obtain binary questions with answers \(-1,0\) or \(1\). The client sends
the \(830\) queries to be classified by each source, and after the
classification, the sources secret-share them on two servers to evaluate
using MPC the 3-Estimates truth-finding algorithm. At the end of the
evaluation, the results are reconstructed by the client. The results
include the truth value for each query (the label), a difficulty score
for each query, and a trustworthiness factor for each of the \(471\)
sources. In Fig.~\ref{fig:hubdub} we show the difference between the
predictions from the base model and the predictions from the MPC
evaluation. The base model corresponds to the 3-Estimates algorithm
implemented without MPC on the plain data.
\begin{figure}[p]
    \centering
    \includegraphics[width=.7\linewidth]{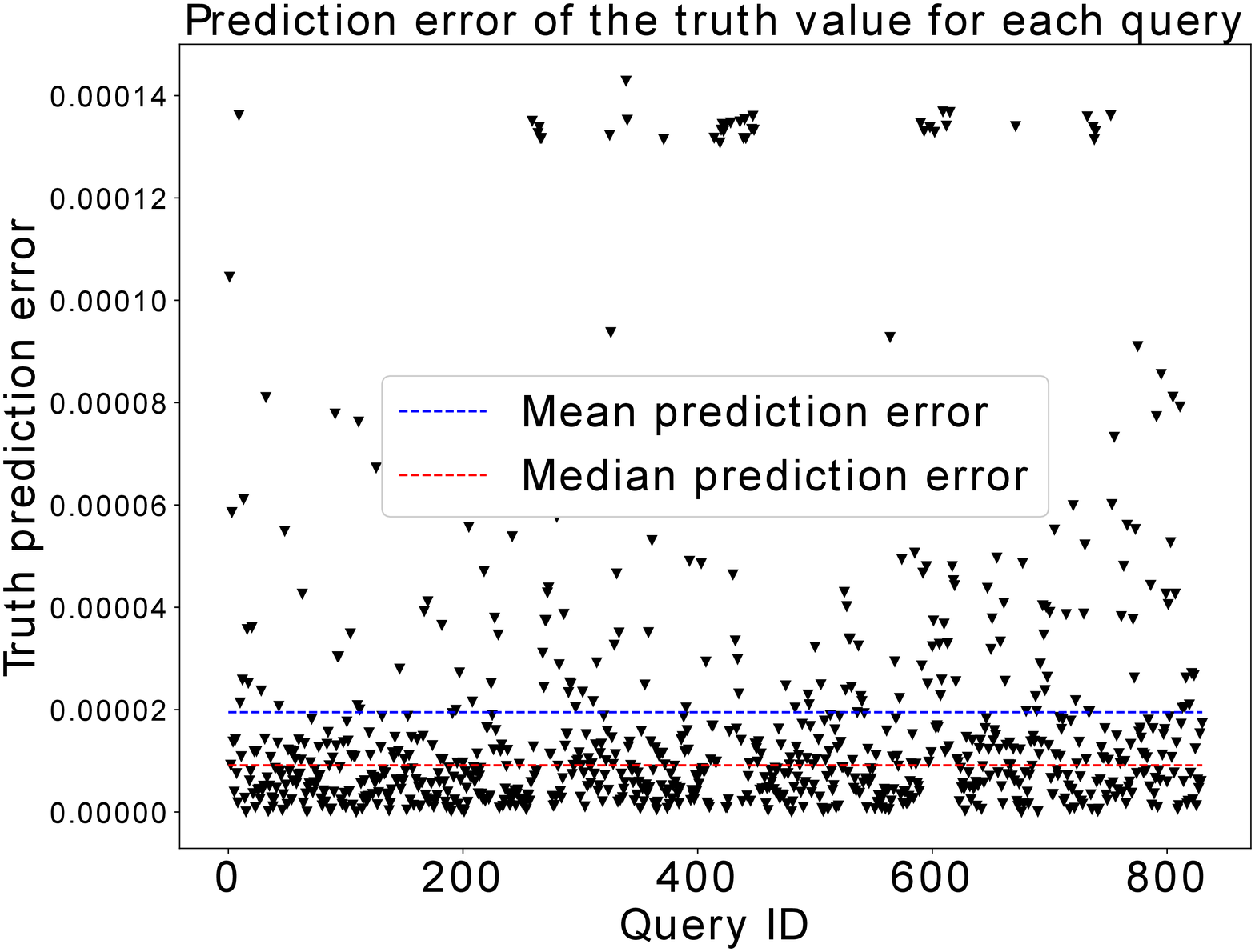}\\[1em]
    \includegraphics[width=.7\linewidth]{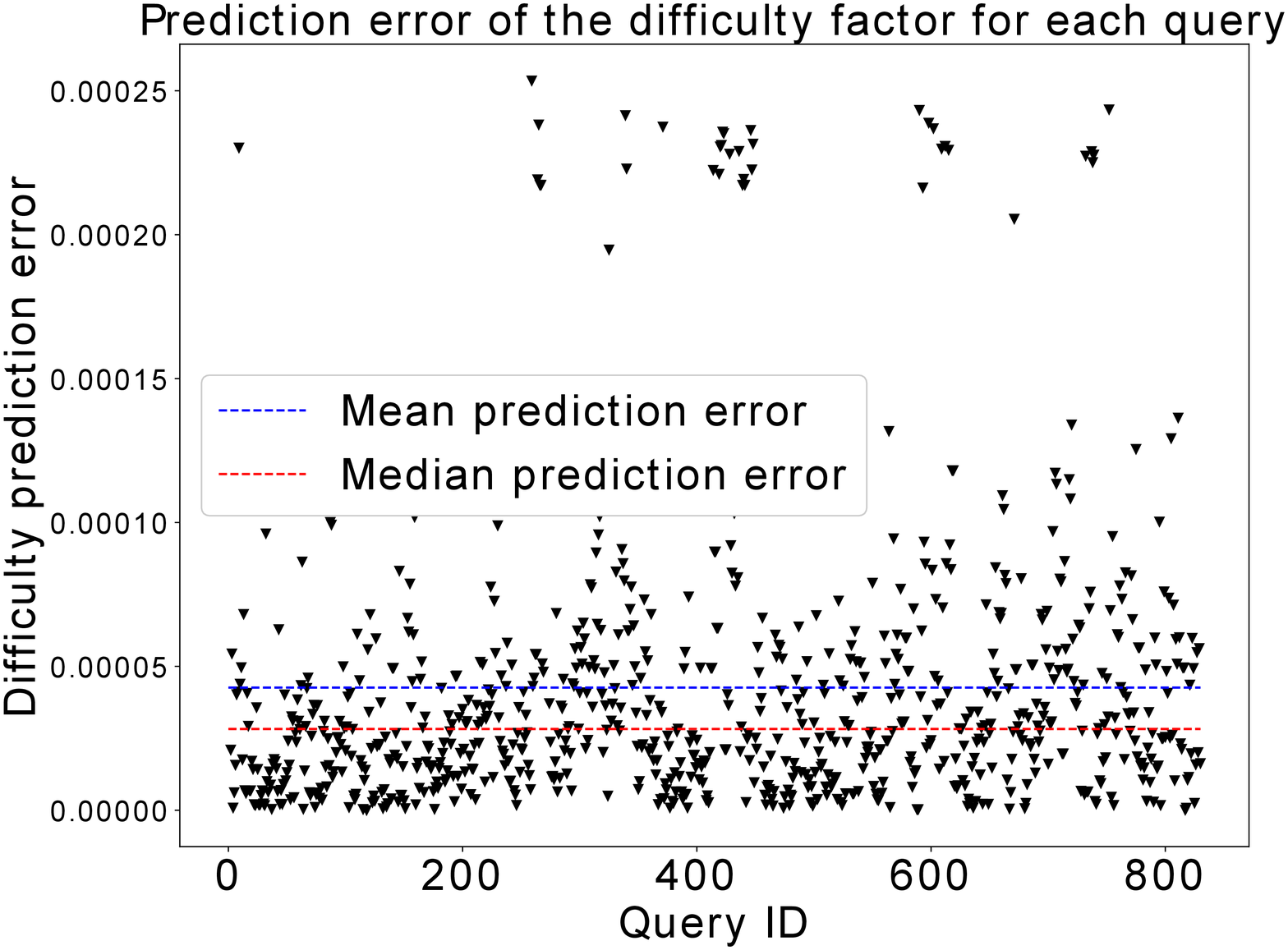}\\[1em]
    \includegraphics[width=.7\linewidth]{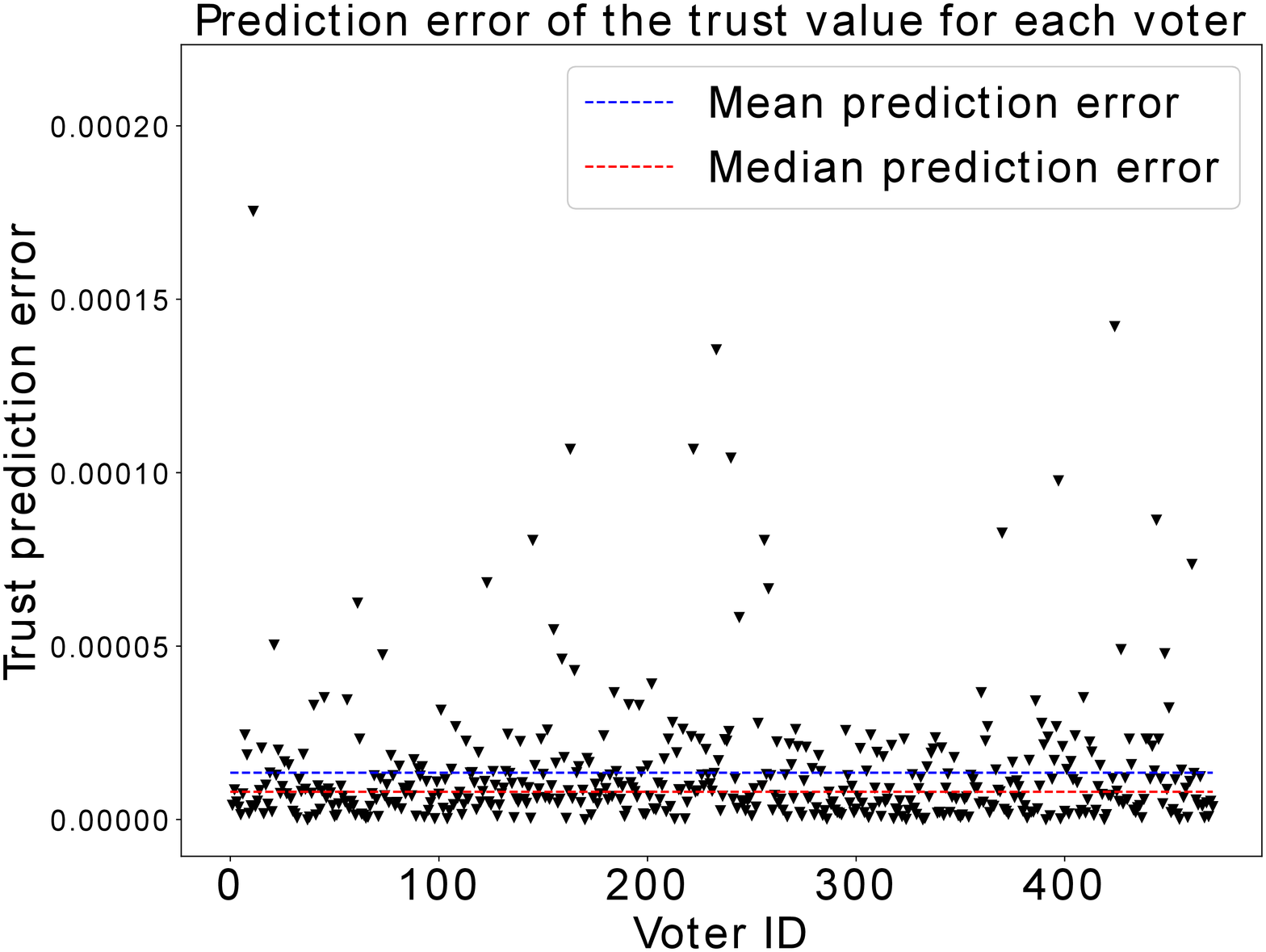}
    \caption{Prediction errors between secure multi-party computation and the base model results with 3-Estimates on Hubdub dataset.}
    \label{fig:hubdub}
\end{figure}
The MPC evaluation contains errors compared to the base model, and these
errors are mostly below \(10^{-4}\). To evaluate the impact of the errors
induced by MPC, we look at label prediction. The MPC method labels all
the questions exactly the same way as the baseline method, so both
methods made the same number of errors, i.e., \(269\) (as shown
in~\cite{cosine}, this is less than majority voting and some other
methods). On average, the
execution of each iteration took \(52.85 s\) wall-clock time, or \(39.58s\)
CPU time. The MPC model is \(2\,000\) times slower than the base model, this is due to the high number of comparisons that should be made to normalize the three factors.

If we use the pre-computed linear function \(h\) presented in Sec.~\ref{sec:alg2} the outputs will be very different of course because of
the aforementioned reasons, but wall-clock time of each iteration is
reduced to \(0.58s\) and the CPU time to \(0.48s\) making it almost
\(100\) times faster. This normalization alternative increases the number
of queries labeled differently by the MPC to \(5\), however, it yields
\(266\) errors in total. For this specific dataset, the pre-computed
normalization used happens to gives better results than the original
baseline.

\paragraph{Cosine on MNIST.}
We also implement our solution using the MNIST dataset \cite{mnistdata}
(an image classification dataset where the task is to recognize digits
between 0 and 9 in the image),
this time with the Cosine algorithm.
We consider 15 sources, each training a logistic regression model for
MNIST
on a subset of the considered dataset. The client chooses \(120\) binary
queries to be answered by each source. To apply the MPC solution, the
sources secret-share the answers on two servers and evaluate using MPC
the Cosine truth-finding algorithm. At the end of the evaluation, the
results are reconstructed by the client. The results include the truth
value for each query (the binary label) and a trustworthiness factor for
each of the 15 sources. To evaluate MPC's impact, we compare the results
obtained to a base model. The base model corresponds to the Cosine
algorithm implemented without MPC on the same answers.
Fig.~\ref{fig:cosine-mnist} shows the difference
between predictions from the base model and from the
MPC evaluation.

The MPC evaluation contains errors compared to the base model,
mostly below \(10^{-3}\). To evaluate the impact of the errors
induced by MPC, we look at label prediction. The MPC method labels all
the questions exactly the same way as the baseline method, so both
methods made the same number of errors which is \(12\). On average, the
execution of each iteration took 0.47~s wall-clock time, or 0.36~s
CPU time. The MPC model can be up to \(4\,000\) times slower than the base model.

\begin{figure}[t]
    \centering
    \includegraphics[width=.9\linewidth]{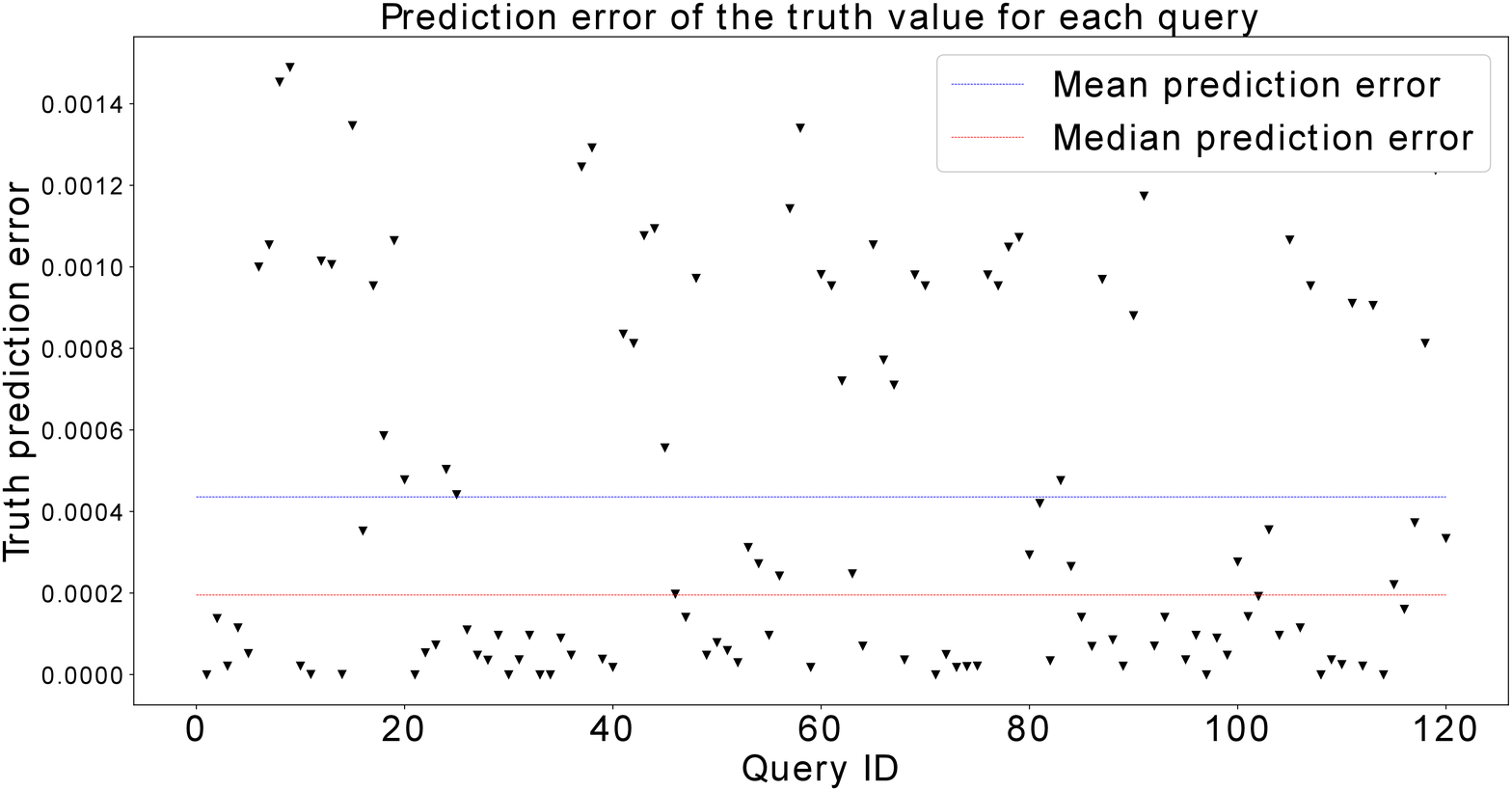}\\[1em]
    \includegraphics[width=.9\linewidth]{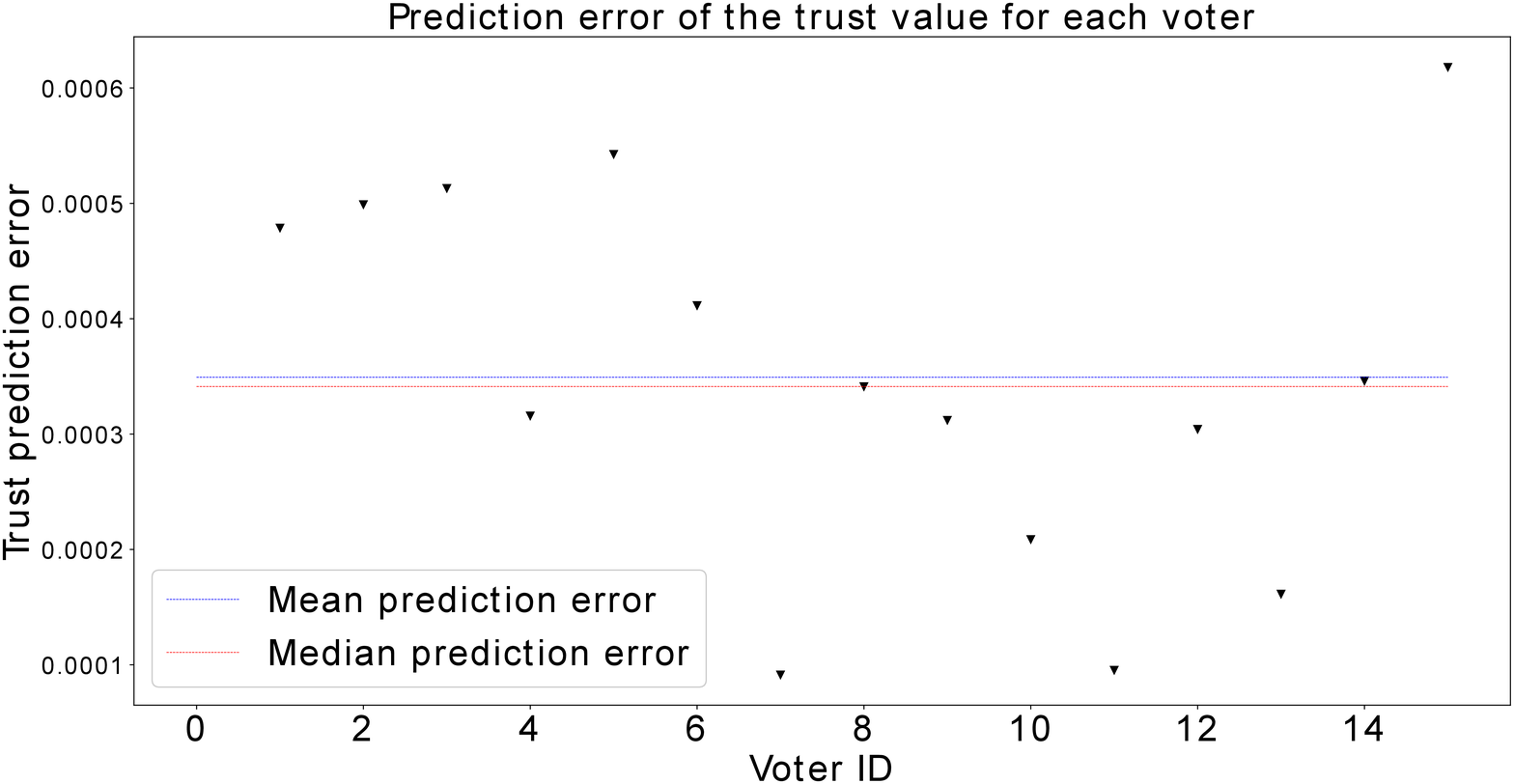}
    \caption{Prediction errors between secure multi-party computation and the base model results with Cosine on MNIST dataset.}
    \label{fig:cosine-mnist}
\end{figure}

If we apply the modifications for Cosine presented in Sec.~\ref{sec:alg2} the outputs will be very different of course because of
the aforementioned reasons. The wall-clock time of each iteration is barely
reduced to 0.44~s and the CPU time to 0.33~s. If there were more
sources, the time difference would have been more significant. This
alternative increases the number of queries labeled differently by the
MPC to \(2\), however, the number of errors is the same: \(12\).

\section{Further Discussion and Conclusion}

In this paper, we devised, presented, and evaluated the performance of MPC protocols for
truth-finding algorithms  corroborating information from disagreeing
views while preserving the confidentiality of the data in the sources. This solution is very
helpful to complete missing, uncertain, or rare data that is confidential
or sensitive, such as financial and medical data (or scientific data in
general). The MPC protocols we have proposed are very versatile and can be used to implement other algorithms securely, in particular our secure equality test alternative based on a simple polynomial evaluation.

The solution proposed can be further improved  by using MPC to protect the client's data and by using differential privacy techniques
\cite{dworkDP} to protect sources' privacy. Several works have demonstrated the possibility to combine MPC and differential privacy~\cite{idash-winner-article-concurrant, mypaper-icdis}. Indeed this would help further protect the models from
inversion attacks. Another application of
the model we propose would be combing MPC with regular voting and
distributed noise generation techniques
\cite{dp-for-mpc-thissen-master-thesis} to build a version of PATE
(private aggregation of teacher ensembles) \cite{pate} that keeps the
teacher's data private. In addition, using a truth-finding algorithm like
3-Estimates instead of regular voting for PATE might yield better
labeling of incomplete data. A research direction would be evaluating the
privacy, security, efficiency, and accuracy of different combinations of
tools like MPC, differential privacy, and truth-finding algorithms.

\subsubsection{Acknowledgments}
This research is part of the program DesCartes and is supported by the National
Research Foundation, Prime Minister’s Office, Singapore under its Campus for
Research Excellence and Technological Enterprise (CREATE) program.

\bibliographystyle{splncs04}
\bibliography{ref}

\end{document}